\newtheorem{theorem}{Theorem}
\newtheorem{lemma}{Lemma}
\newtheorem{example}{Example}
\newtheorem{remark}{Remark}
\DeclareMathOperator{\sech}{sech}
\begin{document}
\baselineskip=17pt

\title{Taylor Series for Adomian Decomposition Method}
\author{Ekaterina Kutafina\\
Department of Applied Mathematics\\
 AGH University of Science and Technology\\
al. Mickiewicza 30\\
30-059 Krak\'ow, Poland\\
E-mail: kutafina@agh.edu.pl}
\date{}

\maketitle

\begin{abstract} 
In this paper we analyse the exact solutions to scalar PDEs obtained thanks to summable Taylor series provided by Adomian's decomposition method. We propose a modification of the method which makes the calculations of Taylor coefficients easier and more direct. The difference is essential for instance in case of nonhomogenous equations or initial conditons and is illustrated by some examples.
\end{abstract}
{\bf Keywords:} Adomian decomposition method; exact solutions; Taylor series\\
 \emph{  2000 Mathematics Subject Classification}:	35C10, 35C05,	65D15.

\section{Introduction}
The area of exact solutions to nonlinear differential equations has become 
very popular in recent decades, when the development of personal computers
enabled more efficient work with known algorithms. Adomian decomposition method
\cite{adomian1,adomian2,wwbook} in the matter of fact was developed to find approximated solutions to differential equations, but in many publications \cite{wwbook,wwfisher} we can find interesting examples where obtained power series were actually summable to exact solutions.
 Typical way to obtain such solutions is to sum up certain Taylor series. In our paper we are going to present some situations when it seems reasonable to use modified techniques to obtain Taylor series. In mathematical physics we often have to deal with scalar PDEs of
space and time variables $x,t\in \mathbb{R}$. We will show that for nonautonomous
equations of this type the method could be easily modified to get Taylor series directly. In order to explain our idea let us first briefly present the classical method, so it would be easier to show the differences. 

\section{Adomian decomposition method}
Let us consider the following  one-dimensional equation in Cauchy-Kovalevska form:
\begin{equation}\label{eqn}u_t(x,t)=F(u,u_x,u_{xx},...,u_{x^n})+g(x),\end{equation}
 where  $x,t\in \mathbb{R}$, $u_t=\frac{\partial u}{\partial t}$, $u_{x^i}=\frac{\partial^i u}{\partial {x^i}}$,  $F(p_0,p_1,p_2,p_3,...,p_{n})$ is a polynomial function of it's arguments and $g(x)$ is an analytic non-autonomous term.  The  corresponding initial condition is $u(x,0)=f_0(x)$. This quite special case would be perfectly sufficient to present the advantages of proposed modification.\\
Let us introduce the auxiliary notation: $G[u]=G(u,u_x,u_{xx},...,u_{x^n})$ for any functional  $G$ defined on jet-space. 
In classical approach LHS of (\ref{eqn}) is usually split into two parts: $F[u]=L_F[u]+N_F[u]$,  where $L_F[u]$ is a linear operator with respect to $u,u_x,...,u_{x^n}$ while $N_F[u]$ is nonlinear part of $F[u]$. Then the operator
\[L^{-1}(.)=\int_0^t(.)\,dt \] can be introduced to express the solution of (\ref{eqn}) in the form:
\[u=f_0(x)+g(x)\,t+\int_0^t L_F[u]+N_F[u]\, dt.\] Next we assume $u=\sum_{i=0}^\infty u_i$ and consequently $L_F[u]=\sum_{i=0}^\infty L_F[u_i]$ and $N_F[u]=N_F[\sum_{i=0}^\infty]=\sum_{i=0}^\infty A_i$. The newly introduced terms $A_i$ are so-called Adomian's polynomials, which could be obtained e.g. with the help of following formulae:
\[A_i=\frac{1}{n!}\frac{d^n}{d\lambda_n}[F(\sum_{i=0}^n \lambda^i u_i)].\] In the paper \cite{wwpoly} author present very intuitive way to obtain these polynomials.
The idea could be easily understood from the example below. 
\begin{example}\label{uux} For instance if we take nonlinearity in the form $ N_F[u]=u\,u_x$ then \[N_F[u]=(u_0+\epsilon u_1+\epsilon^2 u_2+...)(u_{0x}+\epsilon u_{1x}+\epsilon^2u_{2x}+...)=\]
\[=u_0\,u_{0x}+\epsilon (u_1\,u_{0x}+u_0\,u_{1x})+...\]
and here $A_i$ would be a coefficient at $\epsilon^i$. \end{example} Let us notice that "$\epsilon$"-notation was not used in original paper, but in our opinion it makes the choice of $A_i$ more clear. Let us underline that each polynomial $A_i$ is dependent only on the functions $u_0,...,u_i$. No higher orders are involved. Going back to the decomposition algorithm:
\[u=f_0(x)+g(x)\,t+\int_0^t \sum_{i=0}^\infty L_F[u_i]+\sum_{i=0}^\infty A_i \,dt,\]
therefore the following recurrence could be defined:
\[u_0=f_0(x)+g(x\,)t\]
\[u_1=\int_0^t L_F[u_{0}]+A_{0}\, dt\]
\[...\]
\[u_n=\int_0^t L_F[u_{n-1}]+A_{n-1}\, dt\]
\[...\]
\[u=\lim_{k\to \infty}\sum_{i=0}^k u_i.\]
\begin{remark}\label{rem1} If $g(x)=0$ then there exists a sequence of functions $h_i(x)$ such as $u_n=h_n(x)\,t^n$.\end{remark}
\begin{proof} By induction $u_0=f_0(x)=:h_0(x)$. Let us assume, that $u_i=h_i(x)\,t^i$ so $t$ appears in the same power as $\epsilon$ in the example \ref{uux} which means that $A_i=k_i(x)\,t^i$ and
\[u_{i+1}=\int_0^t L_F[h_{i}(x)\,t^i]+k_i(x)t^i dt=\frac{t^{i+1}}{i+1}(L_F[h_i(x)]+k_i(x))=:h_{i+1}t^{i+1}.\]\end{proof}
\begin{remark}\label{rem2} If $f_0(x)=0$ then there exists a sequence of functions $h_i(x)$ such as $u_n=h_n(x)\,t^{n+1}$.\end{remark}
\begin{proof}By induction $u_0=t\,g(x)=:h_0(x)$. Let us assume, that $u_i=h_i(x)\,t^{i+1}$ which means that $A_i=k_i(x)\,t^{i+1}$ and
\[u_{i+1}=\int_0^t L_F[h_{i}(x)\,t^{i+1}]+k_i(x)t^{i+1} dt=\frac{t^{i+2}}{i+2}(L_F[h_i(x)]+k_i(x))=:h_{i+1}t^{i+2}.\]\end{proof}
These remarks imply that in case of autonomous equation or zero initial condition the algorithm leads straight to Taylor series.
\section{Main Results}
In our following research it would be comfortable to skip dividing $F[u]$ into two parts. The whole functional $F[u]$ could be as well approximated by Adomian polynomials. Let us also denote by $u^{(k)}=\sum_{i=0}^{k} u_i$.  The key difference would be the fact that now we choose polynomials in different way. Let $B_0=F[u_0]$, but \[B_i=F[u^{(i)}]-F[u^{(i-1)}]\] for $i=1,2,...$ .
Comparing to the example \ref{uux} now we obtain $B_0=A_0=u_0\,u_{0x}$, but for a change $B_1=F[u_0+u_1]-F[u_0]=u_0\,u_{1x}+u_1\,u_{0x}+u_1\,u_{1x}$. However it is obvious that still $\lim_{i \to \infty}B_i=F[u]$. The new recurrence is:
\[u_0=f_0(x)+g(x\,)t\]
\[u_1=\int_0^t F[u_0]dt=\int_0^t B_0 \,dt\]
\[u_2=\int_0^t B_1\,dt=\int_0^t F[u_0+u_1]-F[u_0]\,dt\]
\[...\]
\[u_k=\int_0^t F[u^{(k-1)}]-F[u^{(k-2)}]\,dt.\]
\begin{theorem}\label{main} Let us consider a partial differential equation in the form $u_t(x,t)=F[u(x,t)]+g(x)$, together with the initial condition $u(x,0)=f_0(x)$ where $x,t \in \mathbb{R}$, $u: \mathbb{R}^2\to\mathbb{R}$, $F[u(x,t)]=F[u(x,t),u_x(x,t),...,u_{x^n}(x,t)]$, $u_{x^i}=\frac{\partial^i u}{\partial x^i}$. We also assume, that $F[u(x,t)]$ is analytical of it's arguments and $F[0]=0$. Then formal Taylor series for the solution $u(x,t)$ could be found using formula (\ref{form}).
\end{theorem}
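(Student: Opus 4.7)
The plan is to show that the partial sums $u^{(k)} = \sum_{i=0}^{k} u_i$ form a Picard-style iteration whose error with respect to the true solution vanishes to strictly higher order in $t$ at each step, so that passing to the limit reproduces the Taylor series of $u(x,t)$ about $t=0$ coefficient by coefficient.

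First I would establish a telescoping identity. From $u_{0,t} = g(x)$ and $u_{j,t} = B_{j-1}$ for $j \geq 1$, the cancellation $\sum_{j=1}^{k} B_{j-1} = F[u^{(0)}] + \sum_{j=2}^{k}\bigl(F[u^{(j-1)}] - F[u^{(j-2)}]\bigr) = F[u^{(k-1)}]$ yields
\begin{equation*}
u^{(k)}_t(x,t) = g(x) + F\bigl[u^{(k-1)}(x,t)\bigr], \qquad u^{(k)}(x,0) = f_0(x),
\end{equation*}
so that each partial sum satisfies the original PDE with $F$ lagged by one iterate.

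Setting $v_k = u - u^{(k)}$, one has $v_k(x,0) = 0$ and $v_{k,t} = F[u] - F[u^{(k-1)}]$. I would then prove by induction on $k$ that, as a formal power series in $t$, $v_k = O(t^{k+1})$. The base case $k=0$ is just $v_0(x,0) = 0$. For the inductive step, assuming $v_{k-1} = O(t^k)$, the analyticity of $F$ in its jet arguments permits the expansion
\begin{equation*}
F[u] - F[u^{(k-1)}] = \sum_{i=0}^{n} F_{p_i}\bigl[u^{(k-1)}\bigr]\, v_{k-1,\,x^i} \;+\; \mathcal{O}\bigl(v_{k-1}^2\bigr),
\end{equation*}
and since formal $x$-differentiation preserves the order of vanishing in $t$, every summand on the right is $O(t^k)$. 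Integrating in $t$ from $0$ then yields $v_k = O(t^{k+1})$, closing the induction.

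It follows that the Taylor coefficients of $u$ in $t$ of degree $\leq k$ coincide with those of $u^{(k)}$, so $\sum_{i=0}^{\infty} u_i$ is the formal Taylor expansion of $u$ about $t=0$, and reading off these coefficients delivers formula (\ref{form}). The most delicate point is justifying the preservation of the vanishing order under the spatial derivatives hidden inside $F$; this is cleanly handled by working throughout in the ring of formal power series in $t$ whose coefficients are analytic functions of $x$, so that $x$-differentiation acts term by term and convergence questions are postponed.
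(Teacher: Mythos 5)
Your argument is correct, and it reaches the conclusion by a genuinely different route from the paper's. The paper never compares the iterates to the exact solution: its key Lemma shows that consecutive partial sums satisfy $u^{(k+1)}-u^{(k)}=t^{k+1}h_{k+1}(x,t)$ (so the coefficients stabilize), and it then produces formula (\ref{form}) by explicitly Taylor-expanding the integrand $F[u^{(k)}]$ in $t$ inside $\int_0^t F[u^{(k)}]\,dt$ and reading off what survives at $t=0$. You instead introduce the error $v_k=u-u^{(k)}$ against the formal solution and run the same order-of-vanishing induction on $v_k$ rather than on $u_{k+1}$; the combinatorial heart (the analytic expansion of $F$ and the fact that a difference of arguments that is $O(t^k)$ forces the difference of values to be $O(t^k)$) is shared, but your version buys a cleaner identification of the limit — it is the Taylor series of the solution by construction, with no need for the paper's long explicit expansion. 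The price is that you must presuppose a unique formal power-series solution $u$ before you can form $v_k$; this is harmless (the coefficients $\partial_t^j u|_{t=0}$ are determined recursively by differentiating $u_t=F[u]+g(x)$ in $t$), but you should say it, and you should also note that this same recursive differentiation is exactly what produces the explicit expressions for $a_1, a_2, a_3,\dots$ in (\ref{form}) — your proof as written establishes that $\sum u_i$ is the Taylor series of $u$ but stops one chain-rule computation short of exhibiting the advertised formulas. Your closing remark about working in the ring of formal power series in $t$ with coefficients analytic in $x$, so that $\partial_x$ acts term by term and preserves the order of vanishing, is exactly the right way to handle the spatial derivatives hidden in $F$, and is in fact cleaner than the paper's treatment of the same point.
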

\begin{proof}

Let us start with the summation:
\[u^{(k)}=\sum_{i=0}^{k} u_i=u_0+\int_0^t F[u^{(k-1)}]\,dt.\]
Therefore using Taylor's formula
\[u^{(k+1)}=u_0+\int_0^t F[u^{(k)}]\,dt=\]
 \[=u_0+\int_0^t F[u^{(k)}]|_{t=0}+\left[ \frac{\partial F[u^{(k)}]}{\partial u^{(k)}}\frac{\partial u^{(k)}}{\partial t}+ \frac{\partial F[u^{(k)}]}{\partial u_x^{(k)}}\frac{\partial u_x^{(k)}}{\partial t}+...+\frac{\partial F[u^{(k)}]}{\partial u_{x^n}^{(k)}}\frac{\partial u_{x^n}^{(k)}}{\partial t}\right]_{t=0}\,t+\]
{\footnotesize\[+\left[\frac{\partial^2 F[u^{(k)}]}{\partial( u^{(k)})^2}\left(\frac{\partial u^{(k)}}{\partial t}\right)^2+...+2 \frac{\partial^2 F[u^{(k)}]}{\partial u^{(k)}\partial u_x^{(k)}}\frac{\partial u^{(k)}}{\partial t}\frac{\partial u_x^{(k)}}{\partial t}+...+\frac{\partial F[u^{(k)}]}{\partial u_{x^n}^{(k)}}\frac{\partial^2 u_{x^n}^{(k)}}{\partial t^2}\right]_{t=0}\frac{t^2}{2!}+....\,dt=\]
\[=u_0+ F[u^{(k)}]_{t=0}\,t+\left[\frac{\partial F[u^{(k)}]}{\partial u^{(k)}}\frac{\partial u^{(k)}}{\partial t}+ \frac{\partial F[u^{(k)}]}{\partial u_x^{(k)}}\frac{\partial u_x^{(k)}}{\partial t}+...+\frac{\partial F[u^{(k)}]}{\partial u_{x^n}^{(k)}}\frac{\partial u_{x^n}^{(k)}}{\partial t}\right]_{t=0}\frac{t^2}{2!}+\]
\[+\left[\frac{\partial^2 F[u^{(k)}]}{\partial( u^{(k)})^2}\left(\frac{\partial u^{(k)}}{\partial t}\right)^2+...+2 \frac{\partial^2 F[u^{(k)}]}{\partial u^{(k)} \partial u_x^{(k)}}\frac{\partial u^{(k)}}{\partial t}\frac{\partial u_x^{(k)}}{\partial t}+...+\frac{\partial F[u^{(k)}]}{\partial u_{x^n}^{(k)}}\frac{\partial^2 u_{x^n}^{(k)}}{\partial t^2}\right]_{t=0}\,\frac{t^3}{3!}+....=\]}

Before we continue let us formulate the following lemma:

\begin{lemma}\label{lemma} If $u^{(i)}=a_0+a_1\,t+a_2\,t^2+...$ and  $u^{(i+1)}=b_0+b_1\,t+b_2\,t^2+...$ then $a_s=b_s$ for $s\leq i$.\end{lemma}
\begin{proof}The statement holds if and only if $u_{i+1}=t^{i+1}\,h_{i+1}(x,t)$ for some analytical function $h_k$ and $k\geq 1$.
\\  Basis: for $k=1$ $u_1=\int_0^t F[u_0]\,dt$. Since $F$ is analytical, it could be written in series form:
\[F(u,u_x,...,u_{x^n})=\sum_{i_0,i_1,...,i_n}^{}b_{i_0\,i_1\,...i_n}u^{i_0}
u_x^{i_1}...u_{x^n}^{i_n}.\] 
Thus $F$ also could be written as series with respect to $t$, $F[u_0]=c_0+c_1\,t+...$ and after integration $u_1=c_0\,t+\frac{c_1}{2!}\,t^2+...$ .\\ 
 Inductive step: we assume, that $u_k=t^kh_k(x,t)$, then 
\[u_{k+1}=\int_0^t F[u_0+u_1+...+u_k]-F[u_0+u_1+...+u_{k-1}]\,dt=\int_0^t F[S+t^kh_k(x,t)]-F[S]\,dt,\]
where $S=u_0+u_1+...+u_{k-1}$. Using the series form:
{\footnotesize \[u_{k+1}=\int_0^t \sum_{i_0,i_1,...,i_n}b_{i_0\,i_1\,...i_n}((S+t^kh_k(x,t))^{i_0}
(S+t^kh_k(x,t))_x^{i_1}...(S+t^kh_k(x,t))_{x^n}^{i_n}-S^{i_0}
S_x^{i_1}...S_{x^n}^{i_n})\,dt\]}
In the main theorem we assumed $F[0]=0$ so $b_{00...0}=0$. Thus the smallest possible power of $t$ in the sum is $t^k$ and after integration the proof is completed.
 \end{proof}
 Going back to the main proof:
{\footnotesize \[=u_0+ F[f_0]\,t+\left[\frac{\partial F}{\partial u}[f_0](g(x)+F[f_0])+ \frac{\partial F}{\partial u_x}[f_0](g(x)+F[f_0])_x+...+\frac{\partial F}{\partial u_{x^n}}[f_0](g(x)+F[f_0])_{x^n}\right]\frac{t^2}{2!}+\]
\[+ \Big[\frac{\partial^2 F}{\partial u^2}[f_0]\left(g(x)+F[f_0]\right)^2+...+2 \frac{\partial^2 F}{\partial u \, \partial u_x}[f_0](g(x)+F[f_0])(g(x)+F[f_0])_x+... \]
\[ +\frac{\partial F}{\partial u_{x^n}}[f_0]\left( \frac{\partial F}{\partial u}[f_0](g(x)+F[f_0])+ \frac{\partial F}{\partial u_x}[f_0](g(x)+F[f_0])_x+...+\frac{\partial F}{\partial u_{x^n}}[f_0](g(x)+F[f_0])_{x^n}\right)\Big]\,\frac{t^3}{3!}+...\,.\]}
In other words, after taking limit we obtain the following formal form:

\begin{equation}\label{form}u(x,t) \approx a_0+a_1\,\frac{t}{1!}+a_2\,\frac{t^2}{2!}+...,  \end{equation}
where
\[a_0=f_0\]
\[a_1=g(x)+F[a_0]\]
\[a_2=\frac{\partial F}{\partial u}[a_0]\,a_1+\frac{\partial F}{\partial u_x}[a_0]\,a_{1x}+...+\frac{\partial F}{\partial u_{x^n}}[a_0]\,a_{1x^n}\]
\[a_3=\frac{\partial^2 F}{\partial u^2}[a_0]a_1^2+\frac{\partial^2 F}{\partial u_x^2}[a_0]a_{1x}^2+...+2\frac{\partial^2 F}{\partial u\,\partial u_x}[a_0]a_1a_{1x}+...+\frac{\partial F}{\partial u}[f_0]a_{2}+...+\frac{\partial F}{\partial u_{x^n}}[f_0]a_{2x^n}\]
\[...\]
Further terms could be easily obtained using formulae  for higher differentials and the key fact, that $u^{k}$ and $u^{k+1}$ have the same coefficients up to $k$th power.
\end{proof}

\section{Examples}
\begin{example} Let us start with the example from \cite{wwfisher}. Authors considered the following Fisher's equation
\begin{equation}\label{fish} u_t=u_{xx}+6\,u(1-u)\end{equation} together with the initial condition \[u(x,0)=\frac{1}{(1+e^x)^2},\]
and obtain the exact solution of (\ref{fish}) using Adomian decomposition method. Here $F[u]=F(u, u_{xx})$, $\frac{\partial F(u, u_{xx})}{\partial u}=6-12\,u$,
 $\frac{\partial F(u, u_{xx})}{\partial u_{xx}}=1$, $\frac{\partial^2 F(u,u_{xx})} {\partial u^2}=1$, $f_0=\frac{1}{(1+e^x)^2}$,
\[a_0=f_0\]
\[a_1=F[f_0]=\left(\frac{1}{(1+e^x)^2}\right)''+6\,\frac{1}{(1+e^x)^2}\left(1-\frac{1}{(1+e^x)^2}\right)=
\frac{10 e^x}{(1+e^x)^3}\]
\[a_2=(6-12 f_0)a_1+a_{1xx}=50\frac{e^x(2e^x-1)}{(e^x+1)^4}\]
\[a_3=-12(a_{1})^2+(6-12 f_0)a_2+a_{2xx}=250\frac{4e^{2x}-7e^x+1}{(e^x+1)^5}\]
\[etc..\]
The result after summation repeats the cited paper: \[u(x,t)=\frac{1}{(1+e^{x-5t})^2}.\].
\end{example}
As we have seen from propositions \ref{rem1}, \ref{rem2} the classical method does not lead straight to Taylor's series only if the equation or initial condition is nonhomogenous. The next two examples cover both situations.
\begin{example}
Now let us consider the non-autonomous heat equation \cite{wwbook}: \[u_t=u_{xx}+\sin x,\] with the initial condition $u(0,x)=\cos x$.
Using classical approach we obtain:
\[u_0=\cos x+t\sin x\]
\[u_1=-t\cos x-\frac{1}{2!}t^2\sin x\]
\[u_2=\frac{1}{2!}t^2\cos x+\frac{1}{3!}t^3\sin x\]
\[....\]
Applying theorem \ref{main} we can directly obtain Taylor series (the only non-zero derivative is $\frac{\partial F}{\partial u_{xx}}$):
\[a_0=\cos x\]
\[a_1=\sin x+[\cos x]_{xx}=\sin x -\cos x\]
\[a_2=\frac{\partial F}{\partial u_{xx}}a_{1xx}=-\sin x+\cos x\]
\[a_3=\frac{\partial F}{\partial u_{xx}}a_{2xx}=\sin x-\cos x\]
\[etc.,\]
so finally
\[u(x,t)\approx \cos x+(\sin x-\cos x)(t-\frac{t^2}{2!}+\frac{t^3}{3!}-...)=\cos x e^{-t}+\sin x(1-e^{-t}).\]\end{example}
To complete the illustration we choose the example with nonlinearity and nonautonomous term.
\begin{example} The following inhomogeneous advection problem is solved in \cite{wwbook}:
\[u_t+u\,u_x=x,\qquad u(x,0)=2.\]
With the help of decomposition method the following recursive relations was obtained:
\[u_0=2+x\,t\]
\[u_1=-t^2-\frac{1}{3}xt^3\]
\[u_2=\frac{5}{12}t^4+\frac{2}{15}xt^5\]
\[...\]
Meanwhile using  theorem \ref{main}
\[f_0=2,\quad g(x)=x,\quad F_u[u]=-u_x,\quad F_u[2]=0,\]\[\quad F_{u_x}[u]=-u\quad  \quad F_{u_x}[2]=-2,\quad F_{u\,u_x}=-1\] and all other derivatives vanish in $f_0$.
\[a_0=2\]
\[a_1=x\]
\[a_2=-2\]
\[a_3=-2x\]
\[etc..\]
In both methods we obtain
\[2\left(1-\frac{1}{2!}t^2+\frac{5}{4!}t^4+... \right)+x\left(t-\frac{1}{3}t^3+\frac{2}{15}t^5\right)=\]
\[=2\, \sech\, t+x\, \tanh\, t.\]
\end{example}

\section*{Acknowledgements} 
The presented  research was partially supported
by the Polish Ministry of Science and Higher Education.

\baselineskip=17pt

\end{document}